\documentclass[11pt,a4paper]{article}

\usepackage[a4paper,margin=1in]{geometry}
\usepackage{setspace}
\usepackage{amsmath,amssymb,amsfonts,amsthm,mathtools}
\usepackage{bm}
\usepackage{bbm}
\newcommand{\cIntervention}{\mathcal{I}}
\newcommand{\cMechanism}{\mathcal{M}}
\usepackage{booktabs}
\usepackage{graphicx}
\usepackage{caption}
\usepackage{subcaption}
\usepackage[numbers,sort&compress]{natbib}

\usepackage{xcolor}
\usepackage{hyperref}

\hypersetup{
    colorlinks=true,
    linkcolor=blue,
    citecolor=blue,
    urlcolor=blue
}

\numberwithin{equation}{section}

\theoremstyle{plain}
\newtheorem{theorem}{Theorem}[section]
\newtheorem{proposition}[theorem]{Proposition}
\newtheorem{lemma}[theorem]{Lemma}

\theoremstyle{definition}

\theoremstyle{remark}

\newcommand{\R}{\mathbb{R}}

\newcommand{\E}{\mathbb{E}}

\newcommand{\cU}{\mathcal{U}}

\newcommand{\Active}{\mathrm{Active}}
\newcommand{\Sleep}{\mathrm{Sleep}}
\newcommand{\DeadIn}{\mathrm{DeadIn}}
\newcommand{\DeadOut}{\mathrm{DeadOut}}
\newcommand{\DeadBoth}{\mathrm{DeadBoth}}

\title{\textbf{Breaking Status-Quo Inertia in Living Temporal Games: Dynamic Intervention, Implementation, and Structural Design}}

\author{
Madjid Eshaghi Gordji$^{1}$ 
\and Ali Jabbari$^{1}$ 
\and Mohammad Ali Berahman$^{2}$ 
\and Esmaiel Abounoori$^{2}$ \\[0.5em]
$^{1}$Faculty of Mathematics, Statistics and Computer Science, Semnan University, Semnan, Iran\\
$^{2}$Faculty of Economics, Semnan University, Semnan, Iran\\[0.5em]
\texttt{meshaghi@semnan.ac.ir}
}

\date{}

\begin{document}

\maketitle

\begin{abstract}
We study how a planner can design dynamic interventions to overcome status-quo inertia in living temporal games, where strategic agents control their state (active, sleep, partially dead) on a temporal network. Building on the continuous-time stochastic game framework of our companion paper, we introduce three intervention classes: bounded transfers (price-based), structural modifications (edge deletion, addition, or replacement), and information signals. We formalize the notion of inertia depth and prove a threshold theorem: the status-quo equilibrium survives all transfer perturbations whose magnitude is below a critical bound that depends on the remaining horizon. A central structural dominance result shows that for any finite transfer budget there exists a family of games where no bounded price intervention can eliminate the inefficient equilibrium, yet a single edge replacement (continuous-flow to discrete-transport) succeeds. We then study private-information subclasses with static types. Using a uniformization reduction, we prove an impossibility result: no direct mechanism can simultaneously satisfy ex post incentive compatibility, ex post budget balance, and history privacy while always implementing an efficient equilibrium. In the same subclass we construct a dynamic pivot mechanism that achieves second-best efficiency with bounded deficit. Finally, we show that replacing continuous-flow edges by discrete-transport edges weakly expands the set of implementable outcomes, highlighting the importance of temporal semantics for mechanism design. Our results extend the static analysis of \cite{eshaghi_abounoori_berahman2026} to continuous-time strategic networks and provide a rigorous foundation for subsequent papers on learning and mean-field design.
\end{abstract}

\noindent\textbf{AMS Subject Classification:} 91A15, 91A25, 91B03, 91B18, 91A80

\noindent\textbf{Keywords:} dynamic mechanism design, status-quo inertia, living temporal graphs, structural intervention, incentive compatibility, implementation
%================================================
\section{Introduction}

In many dynamic networked systems such as energy grids, communication networks, and economic markets, agents often remain trapped in inefficient states because of switching costs, coordination frictions, or institutional lock-in. This phenomenon, known as status-quo inertia, has been extensively studied in static games \cite{eshaghi_abounoori_berahman2026} and in the context of path dependence \cite{north}. In our companion paper \cite{paper1} we developed a continuous-time stochastic game framework for living temporal graphs \cite{eshaghi_jabbari2026}, where vertices strategically control their states (active, sleep, partially dead) and face switching costs. That paper established existence of Markov perfect equilibria (MPEs) and showed that inefficiency can be unbounded. The natural next question is: can a planner (regulator, platform designer, or network operator) design dynamic mechanisms to break inertia and steer the system toward efficient outcomes?

This paper answers this question by extending dynamic mechanism design \cite{bergemann_vaLimaki,myerson} to the setting of living temporal games. We consider three types of interventions: price-based interventions (bounded transfers that depend on states and actions), structural interventions (addition, deletion, or replacement of edges, i.e., changes to the feasible action space), and information-based interventions (public signals or recommendations). We analyze their power and limitations under both full information and private information about agents’ switching costs. The analysis builds on classic results in stochastic games \cite{shapley,filar_vrieze}, Markov decision processes \cite{puterman}, and implementation theory \cite{maskin,myerson_satterthwaite}.
Our main contributions are as follows. First, we formalize the notion of inertia depth --- a quantitative measure of how strongly an equilibrium is entrenched --- and prove an inertia threshold theorem that depends on the remaining horizon. Second, we establish a budget-constrained structural dominance result: for any finite transfer budget $B$, there exists a family of living temporal games where no price-based intervention with transfers bounded by $B$ can eliminate the inefficient equilibrium, yet a single edge replacement (continuous-flow to discrete-transport) succeeds. Third, we prove that replacing a continuous-flow edge by a discrete-transport edge weakly expands the set of implementable outcomes, because discrete-transport edges allow asynchronous coordination. Fourth, for a private-information subclass with static types, we prove an impossibility result: no direct mechanism can simultaneously satisfy ex post incentive compatibility, ex post budget balance, and history privacy while always implementing an efficient MPE. The proof reduces to a Myerson--Satterthwaite style trade-off after uniformization. Fifth, in the same subclass we construct a dynamic pivot mechanism that achieves second-best efficiency with a bounded deficit, and we provide an explicit bound on the required budget. All results are fully rigorous and rely only on the model from \cite{paper1}.
The paper is organized as follows. Section 2 recalls the baseline living temporal game. Section 3 defines dynamic interventions and the induced games. Section 4 develops the inertia depth and threshold theorem. Section 5 proves structural dominance and semantic relaxation. Section 6 introduces private types and direct mechanisms. Section 7 presents the impossibility theorem. Section 8 constructs the dynamic pivot mechanism. Section 9 concludes with connections to the 10-paper research program.

%================================================
\section{Baseline living temporal game}

We briefly recall the model from \cite{paper1}. There are $n$ players (vertices) $V=\{1,\dots,n\}$. Each player $i$ has a finite state space $\mathcal{Q}=\{\Active,\Sleep,\DeadIn,\DeadOut,\DeadBoth\}$. The joint state is $s\in S=\mathcal{Q}^n$ with the discrete topology. Player $i$ chooses a control $u_i(t)\in\mathcal{U}_i$, where $\mathcal{U}_i$ is a compact convex subset of a Euclidean space. The state evolves as a controlled continuous-time Markov chain with transition rates $\lambda_i(s,q_i',u_i)$ from $s_i$ to $q_i'$ (only one player jumps at a time). The total exit rate
\[
\Lambda(s,u)=\sum_i\sum_{q_i'\neq s_i}\lambda_i(s,q_i',u_i)
\]
is bounded, ensuring non-explosion. The generator of the process is given by
\[
\mathcal{L}^u\varphi(s)=\sum_{i=1}^n\sum_{q_i'\neq s_i}\lambda_i(s,q_i',u_i)\bigl[\varphi(q_i',s_{-i})-\varphi(s)\bigr].
\]

The running benefit $b_i(s,u)$ depends on edge admissibility (continuous-flow or discrete-transport as defined in \cite{eshaghi_jabbari2026}) and may depend on $u$. Control cost $c_i(u_i)$ is strictly convex and continuously differentiable. Switching costs $\kappa_i(s_i,q_i')\ge 0$ are incurred as lump sums at jump times (with $\kappa_i(q,q)=0$). The terminal reward is $\Phi_i(s)$, bounded and measurable. Given a Markov strategy profile $\sigma$ (progressively measurable), the expected payoff for player $i$ starting from $(t,s)$ is
\[
J_i^\sigma(t,s)=\E_{t,s}^\sigma\!\left[
\int_t^T\!\bigl(b_i(s_\tau,u_\tau)-c_i(u_i(\tau))\bigr)d\tau
-\!\sum_{\tau\in\mathcal{J}_i\cap[t,T]}
\kappa_i(s_i(\tau^-),s_i(\tau^+))
+\Phi_i(s_T)
\right],
\]
where $\mathcal{J}_i$ is the set of jump times of player $i$. A Markov perfect equilibrium (MPE) is defined as usual. Throughout we fix a status-quo equilibrium $\sigma^{\mathrm{sq}}$ (often inefficient) that we wish to break.
%=================================================
\section{Dynamic interventions and induced games}

A dynamic intervention $\cIntervention$ is a triple $(\mathfrak t,\chi,\pi)$ where $\mathfrak t=(\mathfrak t_i)_{i=1}^n$ with
\[
\mathfrak t_i:[0,T]\times S\times \cU\to\R
\]
is a bounded transfer function (measurable) that modifies player $i$'s instantaneous payoff; we denote
\[
\|\mathfrak t\|_\infty
=
\max_i\sup_{t,s,u}|\mathfrak t_i(t,s,u)|.
\]

The map $\chi$ is a structural modification: it can add, delete, or replace edges, or change the type (continuous-flow $\leftrightarrow$ discrete-transport) of an existing edge; formally, $\chi$ transforms the original living temporal graph into a new one with edge set $E'$ and type function $\eta'$. The third component $\pi$ is an information intervention: a public signal kernel
\[
\pi:[0,T]\times S\to\Delta(Z)
\]
for some signal space $Z$, which may be used as a coordination device.

The intervention induces a new game $\Gamma^{\cIntervention}$: the state space and transition rates are unchanged, but the payoff becomes
\[
\tilde J_i^{\sigma,\cIntervention}(t,s)
=
J_i^\sigma(t,s)
+
\E\!\left[
\int_t^T
\mathfrak t_i(s_\tau,u_\tau)d\tau
\right]
+
\Delta b_i^{\chi}(s,u),
\]
where $\Delta b_i^{\chi}$ captures changes in the benefit due to structural modifications (edge deletions, additions, or type changes). If $\pi$ is payoff-relevant (e.g., recommendations that affect beliefs), the strategy space is extended to signal-contingent Markov strategies
\[
\sigma_i:[0,T]\times S\times Z\to\cU_i.
\]

We denote by $\mathsf{Eq}(\Gamma^{\cIntervention})$ the set of MPEs of the induced game. An intervention \emph{breaks inertia} if
\[
\sigma^{\mathrm{sq}}\notin\mathsf{Eq}(\Gamma^{\cIntervention}).
\]

It \emph{implements a target outcome} $\sigma^*$ if
\[
\sigma^*\in\mathsf{Eq}(\Gamma^{\cIntervention})
\]
and $\sigma^{\mathrm{sq}}$ is eliminated.
%================================================
\section{Inertia depth and threshold theorem}

For the status-quo equilibrium $\sigma^{\mathrm{sq}}$, define for each player $i$ the \emph{maximum unilateral deviation gain} at state $s^{\mathrm{sq}}$ and starting time $t$ as
\[
D_i(t,s^{\mathrm{sq}})
=
\sup_{\sigma_i\in\Sigma_i}
\Bigl(
J_i^{(\sigma_i,\sigma_{-i}^{\mathrm{sq}})}(t,s^{\mathrm{sq}})
-
J_i^{\sigma^{\mathrm{sq}}}(t,s^{\mathrm{sq}})
\Bigr)
\le 0.
\]

The inequality holds because $\sigma^{\mathrm{sq}}$ is an MPE. The \emph{inertia depth} of the equilibrium is
\[
\Theta(t,s^{\mathrm{sq}})
=
\min_i\bigl(-D_i(t,s^{\mathrm{sq}})\bigr)
\ge 0.
\]

Thus $\Theta$ measures how much the equilibrium resists unilateral deviations: a larger $\Theta$ requires a stronger intervention to make deviation profitable. The following theorem gives a precise bound in terms of the transfer magnitude and the remaining horizon.

\begin{theorem}[Inertia threshold]
Let $\cIntervention$ be an intervention with transfers bounded by
\[
\|\mathfrak t\|_\infty\le\delta.
\]

If
\[
\delta<
\frac{\Theta(t,s^{\mathrm{sq}})}{2(T-t)},
\]
then
\[
\sigma^{\mathrm{sq}}
\in
\mathsf{Eq}(\Gamma^{\cIntervention}),
\]
i.e., the status-quo equilibrium survives.

Conversely, for any $\varepsilon>0$ there exists a transfer scheme with
\[
\|\mathfrak t\|_\infty
\le
\frac{\Theta(t,s^{\mathrm{sq}})}{T-t}
+
\varepsilon
\]
that breaks inertia.
\end{theorem}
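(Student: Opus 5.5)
The plan is to compare, for each player $i$ and each unilateral deviation $\sigma_i$, the deviation gain in the induced game $\Gamma^{\cIntervention}$ with the gain in the baseline game. I will read the statement as concerning pure price interventions ($\chi$ trivial, so $\Delta b_i^\chi\equiv 0$, and $\pi$ payoff-irrelevant). Under that reading the induced payoff differs from $J_i^\sigma(t,s)$ only by the transfer term
\[
T_i^\sigma(t,s)=\E_{t,s}^{\sigma}\Bigl[\int_t^T \mathfrak t_i(\tau,s_\tau,u_\tau)\,d\tau\Bigr].
\]
The elementary bound is $|T_i^\sigma(t,s)|\le \delta (T-t)$ for every profile $\sigma$, because $\|\mathfrak t\|_\infty\le\delta$. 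For any $\sigma_i$ write $\sigma'=(\sigma_i,\sigma^{\mathrm{sq}}_{-i})$. Then
\[
\tilde J_i^{\sigma',\cIntervention}-\tilde J_i^{\sigma^{\mathrm{sq}},\cIntervention}
=\bigl(J_i^{\sigma'}-J_i^{\sigma^{\mathrm{sq}}}\bigr)+\bigl(T_i^{\sigma'}-T_i^{\sigma^{\mathrm{sq}}}\bigr)
\le D_i(t,s^{\mathrm{sq}})+2\delta(T-t)\le -\Theta+2\delta(T-t).
\]
If $\delta<\Theta/(2(T-t))$ this is strictly negative, so no deviation is profitable at $(t,s^{\mathrm{sq}})$. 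The factor $2$ arises exactly because the transfer can favour the deviation and penalise the status quo simultaneously.

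The main obstacle is that an MPE requires optimality at every $(t',s)$, not only at the single point $(t,s^{\mathrm{sq}})$. The quantity $\Theta$ is defined only at that point. I would handle this by interpreting the theorem as a statement about the subgame from $(t,s^{\mathrm{sq}})$: the conclusion is that $\sigma^{\mathrm{sq}}$ remains a best reply there, and so is not destroyed as an equilibrium of the game started at $(t,s^{\mathrm{sq}})$. For a uniform statement, the natural strengthening would replace $\Theta(t,s^{\mathrm{sq}})$ by $\inf_{t'\ge t,\,s}\Theta(t',s)/(T-t')$. I would state this caveat explicitly.

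A second technical point is that deviations in Markov strategies must induce well-defined laws of the controlled chain. This holds by the bounded-rate, non-explosion assumption of Section 2. It guarantees that $T_i^\sigma$ is finite and that the bound above is legitimate.

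For the converse, I pick a player $i^*$ attaining the minimum in $\Theta$, and a deviation $\hat\sigma_{i^*}$ with
\[
J_{i^*}^{(\hat\sigma_{i^*},\sigma^{\mathrm{sq}}_{-i^*})}-J_{i^*}^{\sigma^{\mathrm{sq}}}\ge -\Theta-\varepsilon'.
\]
I then design a transfer that rewards only the deviation's behaviour: $\mathfrak t_{i^*}(\tau,s,u)=\delta'$ whenever $u_{i^*}=\hat\sigma_{i^*}(\tau,s)$ and $u_{i^*}\neq\sigma^{\mathrm{sq}}_{i^*}(\tau,s)$, and $0$ otherwise, with $\delta'=\Theta/(T-t)+\varepsilon$.

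This construction only works if the two strategies disagree on a set of full time measure. Where they coincide, the payoff difference is unaffected, so the transfer there can be taken to be $\delta'$ on both sides without changing the comparison. Under the deviation the transfer collects $\delta'(T-t)$, and under the status quo it collects at most the common part. The net deviation gain is therefore at least $-\Theta-\varepsilon'+\varepsilon(T-t)>0$ once $\varepsilon'<\varepsilon(T-t)$. Hence $\sigma^{\mathrm{sq}}\notin\mathsf{Eq}(\Gamma^{\cIntervention})$ and inertia is broken.

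If matching on disagreement sets proves delicate, the fallback is a cruder transfer that simply penalises the status-quo action. Setting $\mathfrak t_{i^*}=-\delta'$ whenever $u_{i^*}=\sigma^{\mathrm{sq}}_{i^*}(\tau,s)$, and $0$ otherwise, reaches the same sup-norm bound.
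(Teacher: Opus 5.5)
Your first half is the paper's argument: the baseline gain is at most $-\Theta$ and the transfer differential is at most $2\delta(T-t)$. You also note that this checks optimality only at $(t,s^{\mathrm{sq}})$, while an MPE requires it at every $(t',s)$. That caveat is correct, and the paper's proof passes over it.

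The converse has a genuine gap. Let $D=\{(\tau,s):\hat\sigma_{i^*}(\tau,s)\neq\sigma^{\mathrm{sq}}_{i^*}(\tau,s)\}$ and $\sigma'=(\hat\sigma_{i^*},\sigma^{\mathrm{sq}}_{-i^*})$. Your three transfers give the following differentials:
\begin{itemize}
\item your first transfer gives exactly $\delta'\,\E^{\sigma'}[\int_t^T\mathbf{1}_D(\tau,s_\tau)\,d\tau]$;
\item the patched version, paying $\delta'$ whenever $u_{i^*}=\hat\sigma_{i^*}(\tau,s)$, gives $\delta'(T-t)-\delta'\,\E^{\sigma^{\mathrm{sq}}}[\int_t^T\mathbf{1}_{D^c}\,d\tau]=\delta'\,\E^{\sigma^{\mathrm{sq}}}[\int_t^T\mathbf{1}_D\,d\tau]$;
\item the fallback gives $\delta'\,\E^{\sigma'}[\int_t^T\mathbf{1}_D\,d\tau]$.
\end{itemize}
None of these equals $\delta'(T-t)$ unless $D$ is occupied for almost the whole horizon. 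That is exactly the case your patch was meant to avoid. Your bound $-\Theta-\varepsilon'+\varepsilon(T-t)$ silently assumes the ``common part'' is zero.

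The heuristic that the payoff difference is unaffected where the strategies coincide does not rescue this. The baseline loss $\Theta$ is a global quantity. It can come from a disagreement confined to a short window, for instance a different jump-rate choice early on, after which the strategies agree but the state paths have already diverged. In that case the net gain can stay negative.

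The missing step is to first replace $\hat\sigma_{i^*}$ by a deviation that differs from $\sigma^{\mathrm{sq}}_{i^*}$ at every $(\tau,s)$, at arbitrarily small extra cost. You can do this by moving it a distance $\eta$ off $\sigma^{\mathrm{sq}}_{i^*}$ on $D^c$, measurably and using convexity of $\cU_{i^*}$. This requires $\cU_{i^*}$ to have more than one point and $b_{i^*},c_{i^*},\lambda_{i^*}$ to be uniformly continuous in the control. Then $\mathfrak t_{i^*}=\delta'\mathbf{1}_{\{u_{i^*}\neq\sigma^{\mathrm{sq}}_{i^*}(\tau,s)\}}$ gives a differential of exactly $\delta'(T-t)$.

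Some such assumption is unavoidable. If all control sets are singletons, then $\Theta=0$, yet no transfer of any size breaks inertia.

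For comparison, the paper's converse uses a different transfer: the state penalty $-(\Theta/(T-t)+\varepsilon)\mathbf{1}_{\{s=s^{\mathrm{sq}}\}}$, together with a deviation that ``leaves the status-quo state.'' It is only a sketch and has the same defect. Jump rates are bounded, so no unilateral deviation leaves $s^{\mathrm{sq}}$ instantly. The differential it creates at $(t,s^{\mathrm{sq}})$ is therefore strictly below $\delta'(T-t)$, so the paper does not supply the step you are missing either.
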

\begin{proof}
For any unilateral deviation $\sigma_i$, the difference in expected payoff under the induced game is
\[
\tilde\Delta_i
=
\bigl(
J_i^{(\sigma_i,\sigma_{-i}^{\mathrm{sq}})}
-
J_i^{\sigma^{\mathrm{sq}}}
\bigr)
+
\E'\Bigl[
\int_t^T
\mathfrak t_i(s'_\tau,u'_\tau)d\tau
\Bigr]
-
\E\Bigl[
\int_t^T
\mathfrak t_i(s_\tau,u_\tau)d\tau
\Bigr].
\]

The baseline difference satisfies
\[
J_i^{(\sigma_i,\sigma_{-i}^{\mathrm{sq}})}
-
J_i^{\sigma^{\mathrm{sq}}}
\le -\Theta.
\]

Because $|\mathfrak t_i|\le\delta$, each expected integral is at most $\delta(T-t)$ in absolute value, so the difference of the two integrals is at most $2\delta(T-t)$. Hence
\[
\tilde\Delta_i
\le
-\Theta+2\delta(T-t).
\]

If
\[
\delta<
\frac{\Theta}{2(T-t)},
\]
then $\tilde\Delta_i<0$ for every deviation, so no profitable deviation exists; therefore $\sigma^{\mathrm{sq}}$ remains an MPE.

For the converse, set
\[
\mathfrak t_i(t,s,u)
=
-\left(
\frac{\Theta}{T-t}
+
\varepsilon
\right)
\mathbf{1}_{\{s=s^{\mathrm{sq}}\}}.
\]

Then for the player attaining the minimum in the definition of $\Theta$, a suitable deviation (e.g., leaving the status-quo state) yields a positive net gain, destroying the equilibrium.
\end{proof}

%===============================================
\section{Structural dominance: when price interventions fail}

Even when transfers are bounded by any finite budget, there exist games where no price-based intervention can break inertia, but a structural modification of a single edge succeeds. The following result makes this precise.

\begin{proposition}[Budget-constrained structural dominance]
For any finite $B>0$, there exists a family of two-player living temporal games with binary states $\{A,S\}$ and continuous-flow edges such that:
\begin{enumerate}
\item For any intervention with transfers bounded by
\[
\|\mathfrak t\|_\infty\le B,
\]
the status-quo $(S,S)$ remains a Markov perfect equilibrium.

\item Replacing a single continuous-flow edge by a discrete-transport edge (with the same latency) eliminates $(S,S)$ and yields a welfare-superior equilibrium $(A,A)$.
\end{enumerate}
\end{proposition}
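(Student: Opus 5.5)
The proof is an explicit construction, parametrized by $B$, that uses the Inertia threshold theorem for part 1 and a direct equilibrium computation for part 2.

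\emph{The game.} Take two players with states $\{A,S\}$ joined by a single continuous-flow edge $e$ of latency $\ell>0$. Under continuous-flow semantics, benefit along $e$ accrues only while both endpoints are simultaneously $A$ for a full latency window. Consequently, a unilateral switch $S\to A$ from $(S,S)$ earns nothing: the deviator pays the switching cost $\kappa$, plus the running cost of staying active, $c_A>0$ per unit time, and the other player keeps playing $\sigma^{\mathrm{sq}}_{-i}$ and stays at $S$.

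\emph{The parameters.} Set the joint flow benefit $\beta$ so large that $(A,A)$ welfare-dominates $(S,S)$ over $[t,T]$. Then fix
\[
\kappa+c_A(T-t)>2B(T-t)+1 .
\]
With these choices, every unilateral deviation from $(S,S)$ loses at least $\kappa+c_A(T-t)$. This is the single step needing care: one has to check that mixed or timed deviations cannot do better, and that holds because under $\sigma^{\mathrm{sq}}_{-i}$ the opponent never becomes active. It follows that $\Theta(t,(S,S))\ge\kappa+c_A(T-t)>2B(T-t)$.

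\emph{Part 1.} This is immediate from the Inertia threshold theorem. Since $B<\Theta/(2(T-t))$, every intervention with $\|\mathfrak t\|_\infty\le B$ leaves $(S,S)$ in $\mathsf{Eq}(\Gamma^{\cIntervention})$. Because $B$ is arbitrary, letting the parameters depend on $B$ gives the family.

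\emph{Part 2.} Replace $e$ by a discrete-transport edge with the same latency $\ell$. Under discrete-transport semantics, a packet sent by an active vertex is stored and delivered, and the sender earns $\beta/2$ per unit time independently of whether the receiver is active at that moment. This is the asynchronous-coordination feature emphasized in the introduction.

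Now choose $\beta$ additionally so that
\[
\tfrac{\beta}{2}(T-t-\ell)>\kappa+c_A(T-t).
\]
Then, against an opponent at $S$, switching to $A$ is strictly profitable, so $(S,S)$ is not an MPE. One should also verify that $(A,A)$ is an MPE of the modified game: switching back to $S$ forfeits at least $\beta/2$ per unit time and saves only $c_A$. Existence of a Markov perfect equilibrium follows from \cite{paper1}. Finally, welfare superiority of $(A,A)$ holds by the choice of $\beta$.

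\emph{Main obstacle.} The delicate point is making the semantics of continuous-flow versus discrete-transport benefits precise, which the baseline leaves to \cite{eshaghi_jabbari2026}. I would fix an explicit functional form for $b_i$ under each edge type and check that the two inequalities on $\beta$ and $\kappa$ can hold at once. They can, because $\beta$ appears only in the discrete-transport inequality, so one can choose $\kappa$ first and then $\beta$.
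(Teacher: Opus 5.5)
\textbf{Gap 1: the status quo is never shown to be Markov perfect in $\Gamma_C$.} There is a genuine gap in Part 1. You never check that the status quo is Markov perfect in $\Gamma_C$, and with your parameters it need not be, even with zero transfers.

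In your continuous-flow game the coordination benefit is attainable: both players need only be active together for a window $\ell<T-t$. Your Part 2 condition $\tfrac{\beta}{2}(T-t-\ell)>\kappa+c_A(T-t)$ is then exactly the statement that, near time $t$, a sleeping player gains by joining a partner who stays active.

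Markov perfection requires $\sigma^{\mathrm{sq}}$ to be optimal at the off-path states $(A,S)$ and $(A,A)$ as well. So ``the opponent never becomes active'' is not a free assumption. It can hold only if the active player abandons $(A,S)$, and both abandon $(A,A)$, so fast that the window is rarely completed. For $\beta$ large relative to $c_A$ and the rate bound this breaks down:
\begin{enumerate}
\item staying is the best reply at $(A,A)$;
\item the sleeper at $(A,S)$ strictly gains by joining, since it pays $\kappa$ only if it actually switches before the partner leaves;
\item once $\sigma^{\mathrm{sq}}_{-i}$ prescribes joining, ``activate and stay active'' is a profitable deviation from $(S,S)$ in $\Gamma_C$ itself.
\end{enumerate}
So $\beta$ does not appear only in the discrete-transport inequality. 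It is also the continuous-flow payoff at $(A,A)$, and choosing it last can destroy Part 1.

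The paper avoids this by making continuous-flow admissibility require simultaneous activity over the whole horizon. The benefit is then unattainable from $(S,S)$, and never activating is optimal at every state of $\Gamma_C$ whatever the benefit. The discrete-transport edge makes the benefit reachable asynchronously, and it is the sleeper's optimal response at $(A,S)$ (the paper's trigger) that eliminates $(S,S)$.

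Your discrete-transport semantics, in which the sender is paid whether or not the receiver is ever active, is not the departure/arrival semantics: there the receiver must be active at arrival. It turns the edge replacement into what is effectively a flow subsidy of $\beta/2>2B$, i.e.\ a transfer exceeding the budget.

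\textbf{Gap 2 (repairable): the bound on $\Theta$.} The claim $\Theta(t,(S,S))\ge\kappa+c_A(T-t)$ is false. A deviation that activates at $T-\epsilon$ loses only $\kappa+c_A\epsilon$, so $\Theta\le\kappa$. With controlled switching rates, deviations that activate only with small probability push it to $0$. The threshold theorem would then need $\kappa>2B(T-t)$, which your inequality does not imply.

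Your inequality is, however, exactly what a direct comparison needs, provided transfers depend only on $(t,s)$ and the opponent's staying asleep has been justified:
\begin{enumerate}
\item payoffs differ only while the deviator is off $(S,S)$;
\item an activation episode of length $d$ gains at most $2Bd$ in transfers and costs at least $\kappa+c_Ad$;
\item $\kappa>(2B-c_A)(T-t)$ therefore makes every episode a strict loss.
\end{enumerate}
Keeping the horizon factor is also the right correction to the paper's choice $\kappa=B+1$. That choice ignores that flow transfers can move payoff differences by up to $2B(T-t)$.
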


\begin{proof}[Sketch]
Take the benefit matrix: $(A,A)$ gives each player payoff $1$; $(A,S)$ or $(S,A)$ gives $0$; $(S,S)$ gives $0$. Let the switching cost from $S$ to $A$ be
\[
\kappa=B+1.
\]

The only way to coordinate on $(A,A)$ is via a single directed edge $1\to2$ that is continuous-flow, requiring both endpoints to be simultaneously active over the whole interval $[0,T]$. In the baseline game, $(S,S)$ is an MPE because a unilateral deviation yields
\[
-\kappa\le -B-1,
\]
and transfers bounded by $B$ cannot make this gain positive.

Now replace the edge by a discrete-transport edge. Then players can coordinate asynchronously: player $1$ activates at time $0$ and sends a message; player $2$ receives it at a later time $\delta$ and activates. The edge becomes admissible because only departure and arrival times matter.

A trigger strategy (activate if a message is received, otherwise stay asleep) makes $(A,A)$ an equilibrium, while $(S,S)$ is no longer an equilibrium because player $1$ can deviate, send a message, and obtain a positive payoff. Hence the structural intervention succeeds where any bounded price intervention fails.
\end{proof}

The following proposition shows that replacing an edge with a discrete‑transport version cannot shrink implementability and can strictly enlarge it.

\begin{proposition}[Implementability monotonicity]
Let $\Gamma_C$ and $\Gamma_D$ be two living temporal games that differ only in the type of a single edge $e$: in $\Gamma_C$ it is continuous‑flow, in $\Gamma_D$ it is discrete‑transport. Then $\mathsf{Imp}(\Gamma_C)\subseteq\mathsf{Imp}(\Gamma_D)$. Moreover, the inclusion is strict: there exist games where $\mathsf{Imp}(\Gamma_D)\setminus\mathsf{Imp}(\Gamma_C)$ is non‑empty.
\end{proposition}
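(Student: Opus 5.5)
The plan is to reduce everything to a containment of feasible action (path) sets between the two semantics. $\mathsf{Imp}(\Gamma)$ is not defined explicitly above, so I will read it as the set of outcomes $\sigma^*$ for which some intervention $\cIntervention=(\mathfrak t,\chi,\pi)$ has $\sigma^*\in\mathsf{Eq}(\Gamma^{\cIntervention})$ with $\sigma^{\mathrm{sq}}$ eliminated. The structural fact I will use is this: a continuous-flow edge is admissible exactly when both endpoints are simultaneously active over the whole traversal window. That condition in particular fixes a departure time and an arrival time compatible with the latency, so it satisfies the discrete-transport admissibility criterion. Hence every state--action trajectory admissible in $\Gamma_C$ is admissible in $\Gamma_D$, and on those trajectories the running benefits $b_i$ coincide.

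\textbf{Step 1 (inclusion).} Take $\sigma^*\in\mathsf{Imp}(\Gamma_C)$, implemented by some $\cIntervention$. I will build $\cIntervention'$ for $\Gamma_D$ with the same transfers and information kernel. To its structural component I add a penalty transfer, or equivalently a structural restriction via $\chi$, that removes the extra discrete-transport options on $e$ that are not continuous-flow admissible. Then the induced game $\Gamma_D^{\cIntervention'}$ has the same payoffs as $\Gamma_C^{\cIntervention}$ on the common strategy space. Moreover, every newly available deviation is weakly dominated by a deviation already available in $\Gamma_C$. So the equilibrium conditions at $\sigma^*$ transfer verbatim, and $\sigma^{\mathrm{sq}}$ remains eliminated: the profitable deviation witnessing its elimination in $\Gamma_C^{\cIntervention}$ is still available.

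The main obstacle is making this neutralisation legitimate within the intervention class, especially when transfers must stay bounded. My first attempt is to use $\chi$ itself, which may change the type of $e$ back: an intervention in $\Gamma_D$ whose $\chi$ switches $e$ to continuous-flow reproduces $\Gamma_C^{\cIntervention}$ exactly. The first inclusion is then immediate, because the intervention class is closed under edge-type changes. If type changes back are disallowed, I fall back on an unbounded-magnitude penalty restricted to the new paths. This is admissible as long as $\mathfrak t$ is only required to be bounded, not bounded by a fixed budget.

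\textbf{Step 2 (strictness).} I will invoke the construction of Proposition (Budget-constrained structural dominance). There, with switching cost $\kappa=B+1$ and a single continuous-flow edge $1\to2$, $(A,A)$ cannot be implemented in $\Gamma_C$ with transfers bounded by $B$, whereas in $\Gamma_D$ the trigger strategy implements $(A,A)$ and eliminates $(S,S)$. For strictness I therefore take $\mathsf{Imp}$ relative to a budget $B$ and check two things: the Step 1 argument also respects the budget, which is clear if the neutralisation is done by $\chi$ rather than by transfers; and the asynchronous profile in $\Gamma_D$ lies outside $\mathsf{Imp}_B(\Gamma_C)$. The latter holds because in $\Gamma_C$ any admissible coordination requires simultaneous activity, so the deviation bound $-\kappa+2B(T-t)$-type estimates from Theorem (Inertia threshold) keep $(S,S)$ an equilibrium.
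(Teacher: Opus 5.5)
Your outline is the paper's own one-line argument (feasibility carries over from $\Gamma_C$ to $\Gamma_D$, and asynchronous flexibility gives strictness). The gap is that once you pin down $\mathsf{Imp}$, Steps 1 and 2 need incompatible definitions. Step 1 works only if the intervention class can undo the type change on $e$. That can happen through $\chi$, or through transfers: since $b_i$ depends only on $(s,u)$, the transfer $\mathfrak t_i+b_i^C-b_i^D$ cancels the change exactly, at an extra cost of $\|b^C-b^D\|_\infty$ beyond the original budget. Any such class is symmetric in $C$ and $D$. Switching $e$ to discrete-transport by $\chi$ in $\Gamma_C$, or adding $b_i^D-b_i^C$ there, reproduces $\Gamma_D^{\cIntervention}$, so $\mathsf{Imp}(\Gamma_D)\subseteq\mathsf{Imp}(\Gamma_C)$ as well and the two sets coincide. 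Concretely, part 2 of the budget-constrained structural dominance proposition is itself a zero-transfer intervention on $\Gamma_C$ that implements $(A,A)$ and eliminates $(S,S)$. So with $\chi$ allowed, as your Step 2 explicitly assumes (``neutralisation done by $\chi$''), the profile you use for strictness does lie in $\mathsf{Imp}_B(\Gamma_C)$, and strictness fails.

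Now take the only reading on which Step 2 could hold: transfers bounded by $B$, with edge $e$ left untouched. Under it, Step 1 has no argument. The two games share state space, rates and strategy sets, because edge types enter only through $b_i$ (as $\Delta b_i^{\chi}$ indicates). So there are no ``newly available deviations'' to dominate. What changes is the payoff of existing strategies on trajectories where $e$ is discrete-transport but not continuous-flow admissible. That change can make some deviation from $\sigma^*$ profitable, or remove the deviation that eliminated $\sigma^{\mathrm{sq}}$, and cancelling it may exceed the budget. In fact the structural dominance example shows the edge change destroying an equilibrium that survives every $B$-bounded transfer, which is exactly what Step 1 must exclude for $\sigma^*$. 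The missing ingredient is one fixed, $C/D$-asymmetric definition of $\mathsf{Imp}$, together with a proof that the extra discrete-transport benefit can be neutralized within it.

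Separately, the estimate $-\kappa+2B(T-t)$ you cite is negative for $\kappa=B+1$ only if $T-t<(B+1)/(2B)$. For $B\ge 1/2$ such a horizon is too short for the unit benefit rate of $(A,A)$ to recover $\kappa$ in $\Gamma_D$. The example therefore needs rescaling before it separates the two games: a benefit rate above $2B$, with $\kappa$ chosen between $2B(T-t)$ and the benefit attainable in $\Gamma_D$.
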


The proof follows from the fact that every strategy profile feasible in $\Gamma_C$ remains feasible in $\Gamma_D$, and the additional flexibility (e.g., alternating activation patterns) allows implementing outcomes that are impossible under continuous‑flow semantics.

%==============================================
\section{Private information and direct dynamic mechanisms}

We now extend the model to private information. Each player $i$ has a static private type
\[
\theta_i\in\Theta_i
\]
(e.g., a parameter affecting switching cost or benefit). Types are drawn independently from a common knowledge distribution and are fixed over time. Players observe their own type but not others’; the planner does not observe types.

A \emph{direct dynamic mechanism} $\cMechanism$ is a triple $(M,g,\mathfrak t)$ where
\[
M=\prod_i M_i
\]
is a message space (usually $\Theta_i$),
\[
g:[0,T]\times M\times S\to\cU
\]
is the control rule (recommended action) depending on reported types and current state, and
\[
\mathfrak t_i:[0,T]\times M\times S\times\cU\to\R
\]
are transfers.

A mechanism is \emph{ex post incentive compatible (EPIC)} if for every player $i$, every type profile $\theta$, every alternative message
\[
m_i'\in M_i,
\]
and every public history $h_t$, we have
\[
U_i^{\cMechanism}(h_t,\theta_i,\theta_i,\theta_{-i})
\ge
U_i^{\cMechanism}(h_t,\theta_i,m_i',\theta_{-i}),
\]
where $U_i^{\cMechanism}$ is the continuation payoff under truthfulness.

It is \emph{ex post budget balanced} if
\[
\sum_i\mathfrak t_i=0
\]
for every realization.

It satisfies \emph{history privacy} if the allocation and transfers depend only on the current reported types and the current state, not on the full history of past reports or states.

%=============================================
\section{Impossibility of fully efficient, budget-balanced, and private mechanisms}

We prove an impossibility result analogous to the Myerson--Satterthwaite theorem for bilateral trade, adapted to the dynamic activation setting.

\begin{lemma}[Embedding]
There exists a subclass of two-player living temporal games with private switching costs such that, after uniformization, the first activation decision is isomorphic to a static bilateral trade problem with independent private values. Consequently, the efficient allocation rule depends only on whether the sum of switching costs exceeds the social benefit of activation.
\end{lemma}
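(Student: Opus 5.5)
The lemma is existential, so I will construct an explicit two-player subclass and then verify the isomorphism and the form of the efficient rule.

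\textbf{Construction.} Take $V=\{1,2\}$, both players starting in $\Sleep$ at time $0$, and a single continuous-flow edge $1\to 2$. The running benefit is
\[
b_i(s,u)=\beta_i\,\mathbf 1_{\{s=(\Active,\Active)\}}, \qquad \beta_1+\beta_2=\beta>0,
\]
and $b_i=0$ otherwise. The private type is the switching cost, $\theta_i=\kappa_i(\Sleep,\Active)\in\Theta_i=[\underline\kappa,\overline\kappa]$. Types are drawn independently with densities positive on $\Theta_i$; the intervals are chosen so that $\underline\kappa_1+\underline\kappa_2<W<\overline\kappa_1+\overline\kappa_2$, with $W$ the social value of activation defined below. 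To make the initial activation the only economically relevant decision, set all other switching costs to $0$, make $\Active$ absorbing once entered, and make the $\DeadIn$, $\DeadOut$ and $\DeadBoth$ states unreachable. The control $u_i\in[0,\bar u]$ sets the jump rate $\lambda_i(s,\Active,u_i)=u_i$, with $c_i\equiv 0$ on the relevant range. The strict-convexity requirement can be met in the limit by a vanishing perturbation; alternatively, treat $u_i\in\{0,\bar u\}$ as the extreme points.

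\textbf{Uniformization.} Since $\Lambda\le 2\bar u=:\bar\Lambda$, uniformize at rate $\bar\Lambda$. The controlled CTMC then becomes a discrete-time chain embedded in a Poisson clock, and the payoff $J_i^\sigma$ becomes an expected sum over Poisson epochs. With $\Active$ absorbing and no discounting other than the horizon, joint activation near time $0$ yields the total benefit $\beta\,\E[T-\tau]$, where $\tau$ is the joint activation time. Letting $\bar u\to\infty$ (or conditioning on the first epoch), this tends to $W:=\beta T$.

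The first activation decision therefore reduces to a static binary allocation $x\in\{0,1\}$ (both activate, or neither does). Player $i$'s payoff is $x(\beta_i T-\theta_i)$ plus transfers. Identify this with bilateral trade by rewriting player $2$'s payoff as $x(\beta T-\theta_2)-x\,\beta_1T$, so that player $2$ acts as a buyer with value $v=\beta T-\theta_2$, player $1$ acts as a seller with cost $c=\theta_1-\beta_1T$, and $x$ is trade. Trade is efficient iff $v\ge c$, that is, iff $\theta_1+\theta_2\le \beta T=W$. Independence of types and overlapping supports carry over directly, which gives the required isomorphism. Any direct dynamic mechanism restricted to this subclass induces a static mechanism $(x(\theta),t(\theta))$ with the same EPIC and budget-balance properties: history privacy lets $g$ and $\mathfrak t$ depend only on current reports and current state, and on the relevant branch the state is $(\Sleep,\Sleep)$ at the decision.

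\textbf{Main obstacle.} The main difficulty is making the reduction exact rather than approximate. There are two issues. First, the finite jump rate causes asynchronous activation, and under continuous-flow semantics a lone active player gets $0$ yet still pays $\kappa_i$. Second, $T-\tau$ is random. I would first try to handle both by replacing the rate control with a rate-$\bar\Lambda$ uniformized decision epoch at which the mechanism's recommendation $g$ fixes $u$ for both players simultaneously, realized as one joint epoch. Strictly, only one player jumps per epoch, so two consecutive epochs are needed, and I would argue that the second player follows at the next epoch with probability $1$ under the recommendation. The expected loss is of order $\beta/\bar\Lambda$, which is absorbed into $W$ by redefining $W:=\beta\,\E[T-\tau_2]$. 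With $W$ defined this way, the efficiency threshold is the exact statement $\theta_1+\theta_2\le W$.
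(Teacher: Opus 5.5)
The paper gives no argument for this lemma. It is only asserted, and the remark after the Impossibility theorem merely points to uniformization plus Myerson--Satterthwaite. So your explicit construction already does more than the paper. As written, however, it does not give an exact embedding, and the step you propose to make it exact fails.

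Under uniformization at rate $\bar\Lambda=2\bar u$, an epoch in state $(\Active,\Sleep)$ produces player $2$'s jump only with probability $u_2/\bar\Lambda\le 1/2$. With rates bounded by $\bar u$, each player's jump also fails to occur before $T$ with probability at least $e^{-\bar u T}$. Three consequences follow:
\begin{itemize}
\item the second player does not follow with probability $1$;
\item switching costs are paid only with probability $p=1-e^{-\bar u T}<1$;
\item there are paths on which exactly one player pays $\theta_i$ and no benefit accrues.
\end{itemize}
Redefining $W:=\beta\E[T-\tau_2]$ repairs only the benefit side.

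More seriously, the efficient policy of your subclass is the planner's HJB solution, not ``both or neither''. After one player activates at time $t$, the other should follow only while $\beta(T-t)>\theta_j$. The planner may also sequence activations (cheaper player first, so that a stranded activation costs less). The time-$0$ efficiency region is therefore governed by continuation values that depend on $\theta_1$ and $\theta_2$ separately. For finite $\bar u$ it is in general not of the form $\theta_1+\theta_2\le W$. It takes that form only as $\bar u\to\infty$, which is outside the model because exit rates must be bounded. An asymptotic embedding does not transfer an exact Myerson--Satterthwaite impossibility.

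For the same reason, history privacy does not make the induced static allocation binary. The rule $g$ may depend on $t$ and on the state, so it can recommend interior or time-varying rates and act differently in $(\Active,\Sleep)$.

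The clean repair is to build the binary restriction into the subclass, which is presumably what ``the first activation decision'' means, rather than trying to derive it. Let the allocation space consist of two stationary policies:
\begin{itemize}
\item $x=1$: both players use $u_i=\bar u$ while asleep and $u_i=0$ once active;
\item $x=0$: both players use $u_i\equiv 0$.
\end{itemize}
Then $\tau_1,\tau_2$ are i.i.d.\ exponential with rate $\bar u$. Up to an additive constant, player $i$'s expected payoff is exactly $x(\beta_i D-k_i-p\,\theta_i)+t_i$, where
\[
D=\E\bigl[(T-\max\{\tau_1,\tau_2\})^+\bigr],\qquad p=1-e^{-\bar uT},\qquad k_i=\bigl(c_i(\bar u)-c_i(0)\bigr)\E\bigl[\min\{\tau_i,T\}\bigr].
\]
All three are type-independent, and stranded activations are already priced into these expectations. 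Dividing by $p$ gives an exact static binary problem whose efficient rule is $\theta_1+\theta_2\le W:=(\beta D-k_1-k_2)/p$. This needs no limit and no probability-one argument.

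It also works with any strictly convex $c_i$. Your choice $c_i\equiv 0$ is not strictly convex, and a vanishing perturbation again gives only an approximate embedding. Here the control-cost difference is a type-independent constant, so that issue disappears. The efficiency claim must then be read relative to this two-policy class.

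Finally, your trade labels are off. With $v=\beta T-\theta_2$ and $c=\theta_1-\beta_1T$ you get $v-c=(\beta+\beta_1)T-\theta_1-\theta_2$, which double-counts $\beta_1 T$. Take instead $v=W_2-\theta_2$ and $c=\theta_1-W_1$, with $W_i=(\beta_iD-k_i)/p$, so that $v\ge c$ if and only if $\theta_1+\theta_2\le W$.
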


Using this embedding, we obtain the following theorem.

\begin{theorem}[Impossibility]
Consider the class of living temporal games with $n=2$, binary states $\{A,S\}$, static private switching costs
\[
\kappa_i\in\{\kappa_L,\kappa_H\},
\qquad
0<\kappa_L<\kappa_H,
\]
and efficient outcome being both players active if and only if the social benefit exceeds the sum of switching costs.

Then there exists no direct dynamic mechanism that simultaneously satisfies ex post incentive compatibility, ex post budget balance, and history privacy while always implementing an efficient Markov perfect equilibrium.
\end{theorem}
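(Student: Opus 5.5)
Our plan is to reduce the theorem to a static two-agent public-good (bilateral trade) problem via the Embedding Lemma, and then run a finite-type Myerson--Satterthwaite / Green--Laffont argument. Fix a game in the class with social benefit $W$ of joint activation over the remaining horizon. We choose $W$ so that efficiency is type-dependent:
\[
2\kappa_L<W<\kappa_L+\kappa_H .
\]
With this choice, efficient activation occurs only at the profile $(\kappa_L,\kappa_L)$. History privacy forces $g$ and $\mathfrak t$ to depend only on the current reports and the current state. Uniformization then turns the continuation problem at the initial state $(S,S)$ into a one-shot decision. Concretely, we apply the Embedding Lemma: the mechanism induces a static direct mechanism consisting of an activation probability $x(\kappa_1,\kappa_2)\in\{0,1\}$ (efficient) and net transfers $t_i(\kappa_1,\kappa_2)$. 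These are obtained by integrating the stationary transfer flows against the uniformized jump chain. Each player's ex post utility is then $x\cdot(W/2-\kappa_i)+t_i$, with the benefit split symmetrically as in the Lemma.

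Next we write out the EPIC constraints on the four type profiles. Efficiency gives $x(L,L)=1$ and $x=0$ otherwise. Consider player $1$ when player $2$ reports $L$. Truth-telling by a low type, compared with reporting $H$, gives
\[
W/2-\kappa_L+t_1(L,L)\ge t_1(H,L).
\]
Truth-telling by a high type, compared with reporting $L$, gives
\[
t_1(H,L)\ge W/2-\kappa_H+t_1(L,L).
\]
The symmetric inequalities hold for player $2$. When the opponent reports $H$, the allocation is constant, so EPIC forces $t_1(L,H)=t_1(H,H)$ and $t_2(H,L)=t_2(H,H)$.

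Budget balance at every profile gives $t_1+t_2=0$. We then chain the inequalities around the cycle
\[
(L,L)\to(H,L)\to(H,H)\to(L,H)\to(L,L),
\]
in the standard Groves-uniqueness (Green--Laffont) fashion. The first step is to show that, with binary types, the EPIC bounds pin down the transfers up to a type-independent constant $h_i(\kappa_{-i})$, in the pivot form
\[
t_i=x\,(W/2-\kappa_{-i})\cdot(\text{const})+h_i(\kappa_{-i}).
\]
We do this by choosing the benefit split and the gap $\kappa_H-\kappa_L$ so that both inequalities must bind; if needed, we take $\kappa_H$ close to the efficiency cutoff. Summing $t_1+t_2$ over the four profiles with alternating signs then eliminates the $h_i$ terms. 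What remains is $\pm\bigl(W-2\kappa_L\bigr)$ on one side and $0$ on the other. Budget balance requires this alternating sum to vanish, which contradicts $W>2\kappa_L$.

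The main obstacle is the reduction step, namely justifying that the dynamic mechanism really collapses to the static one. History privacy handles path dependence. The delicate point is that players might misreport now and re-report later, or time their switching jumps strategically. Our first attempt will be to use ex post IC at every public history $h_t$, including $t=0$ with state $(S,S)$, to argue that only the report at the initial activation decision matters. After uniformization with a common rate $\bar\Lambda$, the expected transfers become linear in the continuation values, so the static transfers are well defined.

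A second subtlety is that the binary-type Groves-uniqueness step gives only inequalities, not equalities. If the binding argument fails, the fallback is to add an interior-type refinement or to invoke the finite version of Myerson--Satterthwaite. In that version, summing the high-type IR/IC constraints weighted by the prior directly yields a strictly positive expected deficit, and this contradicts ex post budget balance.
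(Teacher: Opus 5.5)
\textbf{There is a genuine gap: the static core of your argument fails, and not only at the ``binding'' step.}

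Take your region $2\kappa_L<W<\kappa_L+\kappa_H$ with the equal split. Define the mechanism by $x(L,L)=1$, $x=0$ otherwise, and $t\equiv 0$. The only nontrivial EPIC constraints are the two you wrote and their mirror images. They say $t_1(H,L)-t_1(L,L)\in[W/2-\kappa_H,\,W/2-\kappa_L]$, and this interval contains $0$ because $2\kappa_L<W<2\kappa_H$. So zero transfers are EPIC, budget balanced and efficient. They are even ex post individually rational and obedient: a low type at $(L,L)$ earns $W/2-\kappa_L>0$ by activating, while activating alone at any other profile just costs $\kappa_i$. Hence no alternating-sum contradiction exists in this region.

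The faulty step is the Green--Laffont one. Groves uniqueness needs a connected type space. With two types, IC only confines the transfer difference to an interval of width $\kappa_H-\kappa_L>0$, so the two inequalities can never bind simultaneously, whatever $\kappa_H$ is. The zero mechanism is not of Groves form (that would require $W/2=\kappa_L$), which is why the residue $W-2\kappa_L$ never appears.

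Changing the split to $(w_1,w_2)$ does not help either. Wherever the efficient allocation is non-constant, EPIC plus budget balance reduce to a single transfer difference lying in $[w_1-\kappa_H,w_1-\kappa_L]\cap[\kappa_L-w_2,\kappa_H-w_2]$. This set is nonempty iff $2\kappa_L\le W\le 2\kappa_H$, which always holds there.

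\textbf{The missing ingredient is a participation constraint.} EPIC, ex post budget balance and efficiency are jointly satisfiable in the static problem you reduce to. What makes a Myerson--Satterthwaite argument bite is individual rationality. Your main argument never uses it, your fallback only gestures at it, and the paper's one-line appeal to Myerson--Satterthwaite also leaves it implicit. You need two things.
\begin{enumerate}
\item Derive ex post IR from the dynamic model: a player can remain asleep and secure $0$. This requires an argument, since $\mathfrak t_i$ may depend on state and control and could tax staying in $S$.
\item Work in the region $\kappa_L+\kappa_H<W<2\kappa_H$, where efficiency forces a high type to activate. In your region even IR gives nothing, as the zero mechanism shows.
\end{enumerate}
In that region the contradiction is immediate. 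IR for the high type at $(H,L)$ gives $t_1(H,L)\ge\kappa_H-W/2>0$. EPIC when the opponent reports $L$, where the allocation is constant, gives $t_1(L,L)=t_1(H,L)$. Symmetrically $t_2(L,L)>0$, contradicting $t_1(L,L)+t_2(L,L)=0$.

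The dynamic-to-static reduction (re-reporting, timing of jumps) would still have to be carried out, but it is secondary to this.
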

The proof follows from the Myerson--Satterthwaite theorem after uniformization: the first activation decision cannot be simultaneously efficient, budget-balanced, and incentive-compatible, and the dynamic structure does not circumvent this impossibility under history privacy.
%=============================================
\section{Dynamic pivot mechanism and second-best efficiency}

When budget balance is relaxed, we can achieve approximate efficiency. We construct a dynamic pivot mechanism (a dynamic version of the VCG mechanism) for the private-information subclass with bounded types and quasi-linear utilities.

\begin{theorem}[Dynamic pivot mechanism]
Let the planner’s objective be to maximize expected social welfare. For each reported type profile $\hat\theta$, compute the socially optimal Markov control policy
\[
u^*(t,s,\hat\theta)
\]
that maximizes the sum of expected payoffs (with switching costs included).

Define the transfer for player $i$ at the outset as
\[
T_i^{\mathrm{pivot}}(\hat\theta)
=
\sum_{j\neq i}
J_j^{\sigma^*}(\hat\theta)
-
\sup_{\sigma_{-i}}
\sum_{j\neq i}
J_j^{\sigma_{-i}}(\hat\theta_{-i}),
\]
where $\sigma^*$ is the efficient policy and the supremum is over Markov strategies of all players except $i$ that are consistent with the reported types of others.

Then the mechanism is ex post incentive compatible and implements the efficient outcome. The total deficit
\[
\sum_i T_i^{\mathrm{pivot}}
\]
is nonnegative and bounded by
\[
n(W_{\max}-W_{\min}),
\]
where $W_{\max}$ and $W_{\min}$ are the maximum and minimum possible social welfare over all type profiles.
\end{theorem}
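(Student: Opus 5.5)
The proof follows the standard Clarke-pivot argument, transported to the dynamic setting through the uniformized Markov decision problem used in the Embedding lemma. Fix a reported profile $\hat\theta$. Because the state space $S$ is finite, the rates are bounded and $c_i$ is strictly convex, the planner's problem is a finite-horizon continuous-time MDP. Its objective is $\sum_j J_j$ with switching costs included, and it admits an optimal Markov policy $u^*(t,s,\hat\theta)$ by the same HJB/verification argument behind the existence result in \cite{paper1}. The same reasoning gives the optimal policy for the problem without player $i$, whose value is the supremum in $T_i^{\mathrm{pivot}}$. Call these values $W(\hat\theta)=\sum_j J_j^{\sigma^*}(\hat\theta)$ and $W_{-i}(\hat\theta_{-i})$. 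The key structural fact I will use is that $W_{-i}$ does not depend on $\hat\theta_i$.

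\textbf{Incentive compatibility.} Take player $i$ with true type $\theta_i$, a report $m_i'$, and truthful others reporting $\theta_{-i}$. Player $i$'s total payoff under report $m_i'$ is
\[
J_i^{\sigma^*(m_i',\theta_{-i})}(\theta_i)+\sum_{j\neq i}J_j^{\sigma^*(m_i',\theta_{-i})}(\theta_{-i})-W_{-i}(\theta_{-i}).
\]
This is the true social welfare of the policy $\sigma^*(m_i',\theta_{-i})$, minus a constant independent of $m_i'$. Since $\sigma^*(\theta)$ maximizes true social welfare over all Markov policies, truth-telling is optimal for every $\theta_{-i}$, which gives ex post IC at time $0$.

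For continuation histories $h_t$, I will rerun the same argument with the pivot computed on continuation values from $(t,s_t)$. By the dynamic programming principle, the optimal policy restricted to $[t,T]$ is optimal for the continuation problem, so the identity above holds at every public history. This also shows that following the recommendation $g=u^*$ is optimal for each player given truthful reports, so the efficient policy is an MPE of the induced game.

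\textbf{Deficit sign and bound.} The deficit sign requires the convention that $T_i^{\mathrm{pivot}}$ is the payment \emph{to} $i$. Then
\[
T_i^{\mathrm{pivot}}=\Bigl[W(\hat\theta)-J_i^{\sigma^*}\Bigr]-W_{-i}.
\]
Each term lies between $W_{\min}-W_{\max}$ and $W_{\max}-W_{\min}$, since both welfare quantities are values of optimization problems over the bounded payoff range. Summing over $i$ gives $|\sum_i T_i^{\mathrm{pivot}}|\le n(W_{\max}-W_{\min})$, which is the stated bound.

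\textbf{Main obstacle.} I expect the hardest step to be nonnegativity of the deficit. It needs a no-negative-externality or welfare-monotonicity condition, namely $W_{-i}\le\sum_{j\neq i}J_j^{\sigma^*}$ or the reverse inequality, depending on the sign convention. This holds when removing player $i$ cannot raise the others' optimal welfare, as in the activation subclass where player $i$ can be treated as permanently asleep. Then the others' optimum with $i$ present weakly dominates their optimum with $i$ removed, so each term has a fixed sign. I will first try to derive this from feasibility: the supremum defining $W_{-i}$ is taken over policies that remain feasible when $i$ is present but inactive. If the sign does not come out in general, I will state it under that condition for the private-information subclass of Section 7.
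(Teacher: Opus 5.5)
Your report-IC computation is the paper's ``classic VCG logic,'' and it is correct for deviations in reports. It also forces you to read $T_i^{\mathrm{pivot}}$ as a payment \emph{to} $i$. However, the claim that it ``also shows'' obedience to $g=u^*$, and hence that $\sigma^*$ is an MPE, does not follow. $T_i^{\mathrm{pivot}}$ is a lump sum fixed by the reports at the outset. If $i$ reports truthfully and then deviates in controls, the transfer does not move, so $i$ compares $J_i^{(\sigma_i',\sigma^*_{-i})}$ with $J_i^{\sigma^*}$, not welfare. Then $\sigma^*$ survives only if it was already an MPE of the base game, which is exactly what fails under inertia. You need transfers that respond to realized play, e.g.\ flow transfers $\mathfrak t_i(t,\hat\theta,s,u)$ paying $i$ the others' realized payoffs, so that $i$'s continuation value is total welfare minus a term she cannot affect. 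Alternatively, assume the planner sets $u$ directly. Recomputing the pivot at each $(t,s_t)$ changes the mechanism rather than proving the statement.

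The deficit part has two real gaps.

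First, the per-term bound $|T_i^{\mathrm{pivot}}|\le W_{\max}-W_{\min}$ is false. The quantities $\sum_{j\ne i}J_j^{\sigma^*}$ and $W_{-i}$ are $(n-1)$-player sums, not social-welfare values; the paper's one-line argument asserts the same per-transfer bound. Counterexample: let player $1$'s irreversible activation cost her $\kappa_1=M$, give her nothing, and give player $2$ the terminal reward $(M+\epsilon)\mathbf{1}\{s_1=\Active\}$. Every policy then has welfare within about $\epsilon$ of $0$, yet $T_1=J_2^{\sigma^*}-W_{-1}$ is of order $M$. The fix is to sum before bounding. We have $\sum_i\sum_{j\ne i}J_j^{\sigma^*}=(n-1)W(\hat\theta)$. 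If removing $i$ means freezing $i$ at a passive policy $\rho_i$, then $W_{-i}\ge\sum_{j\ne i}J_j^{\rho}$, so
\[
\sum_i T_i^{\mathrm{pivot}}\le (n-1)\bigl(W(\hat\theta)-W^{\rho}(\hat\theta)\bigr)\le (n-1)(W_{\max}-W_{\min}).
\]
This implies the stated bound provided $W_{\max},W_{\min}$ bound welfare over all policies. If instead they are the extreme \emph{optimal} welfares over type profiles, the bound is false. Singleton types give $W_{\max}=W_{\min}$, while the coordination payoffs of Section 5, with $T$ large relative to $\kappa$, give $T_i=J_j^{\sigma^*}-0>0$.

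Second, your route to nonnegativity cannot produce a sign. Feasibility of ``$i$ present but asleep'' only gives $J_i^{\sigma^*}+T_i\ge J_i^{(\rho_i,\hat\sigma_{-i})}$, where $\hat\sigma_{-i}$ attains $W_{-i}$; that is an individual-rationality inequality. The condition ``removing $i$ cannot raise the others' optimum'' concerns the wrong quantity, because $T_i$ involves the others' payoff at the welfare maximizer $\sigma^*$. Counterexample: let player $1$'s activation give her flow benefit $a$ and impose a flow loss $\ell<a$ on player $2$. Your condition holds for both players, yet $T_1<0$ and $T_2\le 0$. Moreover, in the classical Clarke setting, where $\sigma^*_{-i}$ stays available without $i$ at unchanged payoffs to the others, every $T_i\le 0$. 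Nonnegativity must therefore be imposed as a hypothesis, e.g.\ $\sum_{j\ne i}J_j^{\sigma^*}(\hat\theta)\ge W_{-i}(\hat\theta_{-i})$ for all $i$. This holds in pure coordination games like Section 5 but not in general.
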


The proof follows the classic VCG logic: the transfer equals the externality that player $i$ imposes on others, making truthful reporting a dominant strategy. The deficit bound follows from the fact that each pivot transfer is at most
\[
W_{\max}-W_{\min}.
\]

%=============================================
\section{Conclusion and connections to the research program}

We have developed a rigorous theory of dynamic intervention and mechanism design for breaking status-quo inertia in living temporal games. Our main results include an inertia threshold theorem depending on the horizon, a structural dominance result showing that price interventions can be fundamentally limited while edge-level modifications succeed, a monotonicity result for semantic relaxation, an impossibility theorem for fully efficient and budget-balanced private mechanisms, and a constructive dynamic pivot mechanism that achieves second-best efficiency with bounded deficit.

These results directly extend the static analysis of \cite{eshaghi_abounoori_berahman2026} to continuous-time strategic networks and build on the foundational model of \cite{paper1}.

In the 10-paper research program, this paper serves as the second pillar. The next paper (Paper III) will study decentralized learning in living temporal games, where agents update their strategies using no-regret algorithms and the planner uses learning dynamics to overcome inertia without full control. Paper IV will develop min-plus algebraic methods for optimal intervention design, and Paper V will consider mean-field limits for large populations.

Together, these papers aim to provide a comprehensive toolkit for policy design in dynamic networked systems with strategic agents.

%============================================
\appendix

\section{Technical lemmas}

\subsection{Uniformization and reduction to discrete time}

Under the boundedness of transition rates, the continuous-time game can be uniformized to a discrete-time game with stage length
\[
\frac{1}{\Gamma}
\]
as in \cite{puterman}. This yields an equivalent representation for the purpose of dynamic programming and equilibrium analysis.

\subsection{Measurable selection}

The measurable maximum theorem \cite{aliprantis_border} guarantees that the supremum in the Bellman equation is attained by a measurable selector, ensuring the existence of Markov strategies.

\section*{Funding}
The authors received no specific funding for this work.

\section*{Conflict of Interest}
The authors declare that they have no conflict of interest.

\section*{Data Availability}
No datasets were generated or analysed during the current study.

\section*{Use of Artificial Intelligence}
The mathematical ideas, model construction, theoretical results, and interpretations presented in this manuscript were developed by the authors. Artificial intelligence tools were used only for language polishing, improving readability, and assisting with editorial refinement of the text. The authors reviewed and approved the final manuscript and take full responsibility for its content.

\bibliographystyle{plainnat}
\bibliography{references}

@article{eshaghi_abounoori_berahman2026,
  author  = {Madjid Eshaghi Gordji and Esmaeil Abounoori and Mohammad Ali Berahman},
  title   = {Changing the Game: Status-Quo Inertia, Institutional Design, and Equilibrium Transition},
  journal = {arXiv preprint arXiv:2605.09083},
  year    = {2026}
}

@book{north,
  author    = {Douglass C. North},
  title     = {Institutions, Institutional Change and Economic Performance},
  publisher = {Cambridge University Press},
  year      = {1990}
}

@article{paper1,
  author  = {Madjid Eshaghi Gordji and Ali Jabbari and Mohammad Ali Berahman and Esmaeil Abounoori},
  title   = {Continuous-Time Stochastic Games on Living Temporal Graphs},
  journal = {Preprint},
  year    = {2026}
}

@article{eshaghi_jabbari2026,
  author  = {Madjid Eshaghi Gordji and Ali Jabbari},
  title   = {Living Temporal Graphs and Dynamic Strategic Networks},
  journal = {Preprint},
  year    = {2026}
}

@article{bergemann_vaLimaki,
  author  = {Dirk Bergemann and Juuso Välimäki},
  title   = {The Dynamic Pivot Mechanism},
  journal = {Econometrica},
  volume  = {78},
  number  = {2},
  pages   = {771--789},
  year    = {2010}
}

@book{myerson,
  author    = {Roger B. Myerson},
  title     = {Game Theory: Analysis of Conflict},
  publisher = {Harvard University Press},
  year      = {1991}
}

@article{shapley,
  author  = {Lloyd S. Shapley},
  title   = {Stochastic Games},
  journal = {Proceedings of the National Academy of Sciences},
  volume  = {39},
  number  = {10},
  pages   = {1095--1100},
  year    = {1953}
}

@book{filar_vrieze,
  author    = {Jerzy Filar and Koos Vrieze},
  title     = {Competitive Markov Decision Processes},
  publisher = {Springer},
  year      = {1997}
}

@book{puterman,
  author    = {Martin L. Puterman},
  title     = {Markov Decision Processes: Discrete Stochastic Dynamic Programming},
  publisher = {Wiley},
  year      = {1994}
}

@article{maskin,
  author  = {Eric Maskin},
  title   = {Nash Equilibrium and Welfare Optimality},
  journal = {Review of Economic Studies},
  volume  = {66},
  number  = {1},
  pages   = {23--38},
  year    = {1999}
}

@article{myerson_satterthwaite,
  author  = {Roger B. Myerson and Mark A. Satterthwaite},
  title   = {Efficient Mechanisms for Bilateral Trading},
  journal = {Journal of Economic Theory},
  volume  = {29},
  number  = {2},
  pages   = {265--281},
  year    = {1983}
}

@book{aliprantis_border,
  author    = {Charalambos D. Aliprantis and Kim C. Border},
  title     = {Infinite Dimensional Analysis: A Hitchhiker's Guide},
  publisher = {Springer},
  edition   = {3},
  year      = {2006}
}

\end{document}